\theoremstyle{definition}
\newtheorem{ejem}{Example}[section]
\newtheorem{ques}{Question}[section]
\theoremstyle{plain}
\newtheorem{teor}{Theorem}[section]
\newtheorem{cor}{Corollary}[section]
\def\diam{\mathop{ \rm diam}\nolimits}
\def\deg{\mathop{\rm deg}\nolimits}
\DeclareMathOperator{\wt}{wt}
\DeclareMathOperator{\supp}{supp}
\definecolor{lime}{HTML}{A6CE39}
\DeclareRobustCommand{\orcidicon}{%
	\begin{tikzpicture}
	\draw[lime, fill=lime] (0,0)
	circle [radius=0.16]
	node[white] {{\fontfamily{qag}\selectfont \tiny ID}};
	\draw[white, fill=white] (-0.0625,0.095)
	circle [radius=0.007];
	\end{tikzpicture}
	\hspace{-2mm}
}
\xdef\csname orcid\x\endcsname{\noexpand\href{https://orcid.org/\csname orcidauthor\x\endcsname}{\noexpand\orcidicon}}
\title{A graphical representation of binary linear codes}
\author[L. Delgado]{Lisbeth Danyeli Delgado Ordoñez\orcidC{}}
\address{Lisbeth D. Delgado Ordoñez, Departamento de Matem\'aticas, Universidad del Cauca}
\email{lddelgado@unicauca.edu.co}
\author[J.H. Castillo]{John H. Castillo\orcidB{}}
\address{John H. Castillo, Departamento de Matem\'aticas y Estad\'istica, Universidad de Nari\~no}
\email{jhcastillo@udenar.edu.co}
\author[A. Holgu\'in-Villa]{Alexander Holgu\'in-Villa\orcidA{}}
\address{Alexander Holgu\'in-Villa, Escuela de Matem\'aticas, Universidad Industrial de Santander}
\email{aholguin@uis.edu.co}
\keywords{Coset leader, graph, Hasse diagram, linear code, partially ordered set.}
\subjclass[2020]{05C12, 05C15, 05C90, 94A24}
\begin{document}
 \noindent
  \renewcommand{\refname}{References}


\begin{abstract}
A binary $[n,k]$-linear code $\mathcal{C}$ is a $k$-dimensional subspace of $\mathbb{F}_2^n$. For $\boldsymbol{x}\in \mathbb{F}_2^n$, the set $\boldsymbol{x}+\mathcal{C}$ is a coset of $\mathcal{C}$. In this work we study a partial ordering on the set of cosets of a binary linear code
$\mathcal{C}$ of length $n$ and we construct a graph using the orphan structure of this code.
\end{abstract}

\maketitle

\section{Introduction}
The connection between Graph Theory and Coding Theory has been considered by several researchers. The adjacency matrix of a simple graph
with $n$ vertices, labeled by $v_1, v_2,\ldots , v_n$, is a square matrix  of order $n$, whose $(i, j)$-entry is equal to $1$ if the
vertices $v_i$ and $v_j$ are adjacent and is equal to zero otherwise, i.e., the adjacency matrix of a simple graph is a symmetric binary
matrix with diagonal zero which has make it suitable for constructing binary codes, see \cite{rouayheb2012graph,jungnickel1996codes,mallik2021graph,tonchev2002error} and the references quoted therein. Different types of
codes are obtained from this construction depending of the structure of the involved graph. On the other hand, the
work of M. Tanner \cite{tanner1981recursive} introduced the construction of bipartite graphs from the parity-check matrix of codes. Special focus has been dedicated in the application of Tanner graphs to the encoding and decoding algorithms for low-density parity-check codes (LDPC). Intensive work has been done
to analyze the properties of these graphs and to study the characteristics of the related codes, see \cite{etzion1999codes,kschischang2003codes,banihashemi2001tanner,halford2006codes}.

Let $\mathbb{F}_2$ be the field with $2$ elements. A \emph{binary $[n,k]$-linear code} is a $k$-dimensional subspace of $\mathbb{F}_2^n$.
An element of a binary linear code is called a \emph{codeword}. The \emph{Hamming distance}, $d(\boldsymbol{x},\boldsymbol{y})$, between two codewords $\boldsymbol{x}=(x_1,\ldots,x_n), \boldsymbol{y}=(y_1,\ldots,y_n)\in \mathcal{C}\subseteq\mathbb{F}_2^n$ is the number of entries where they differ,
or equivalently,

$$
d(\boldsymbol{x},\boldsymbol{y})=d(\boldsymbol{x},\boldsymbol{y}) =|\{i:x_i\neq y_i, ~1\leq i\leq n\}|.
$$

For $\boldsymbol{x}\in\mathbb{F}_2^n$, the \emph{Hamming weight} of $\boldsymbol{x}$ is $\wt(\boldsymbol{x})=d(\boldsymbol{x},\boldsymbol{0})$, i.e.,
$\wt(\boldsymbol{x})$ is the number of non-zero coordinates in $\boldsymbol{x}$. The \emph{minimum distance} $d(\mathcal{C})=d$ of a linear
code $\mathcal{C}$ is defined as the minimum weight among all non-zero codewords, thus we called it a binary $[n,k,d]$-linear
code. A \emph{generator matrix} for an $[n, k]$-linear code $\mathcal{C}$ is any $k\times n$ matrix $G$ whose rows form a basis for
$\mathcal{C}$. So the code $\mathcal{C}$ can be seen as

\begin{equation}\label{generator_matrix}
\mathcal{C}=\{ \boldsymbol{x}G: \boldsymbol{x}\in \mathbb{F}_2^k\}.
\end{equation}

Also, as a binary linear code is a subspace of a vector space, it is the kernel of some linear transformation. In particular,
there is an $(n-k)\times n$ matrix $H$, called a \emph{parity-check matrix} for the $[n, k]$-linear code $\mathcal{C}$, such that

\begin{equation}\label{check_matrix}
\mathcal{C} =\{\boldsymbol{x} \in \mathbb{F}_2^n:  H\boldsymbol{x}^T = \boldsymbol{0}\}.
\end{equation}

By using elementary row and column operations, we can bring the generator matrix $G$ into the \emph{standard form} $[I_k|A]$ where
$I_k$ denotes the $k\times k$ identity matrix and $A$ is a $k\times (n-k)$ matrix and, then we can obtain a parity-check
matrix for the code $\mathcal{C}$ in the standard form, see \cite[Thm. 1.2.1]{Huffman2003}.

\begin{teor}
Let $\mathcal{C}$ be an $[n, k]$-linear code over $\mathbb{F}_2$ with generator matrix $G=[I_k|A]$. Then $H=[A^t|I_{n-k}]$
is a parity-check matrix for $\mathcal{C}$.
\end{teor}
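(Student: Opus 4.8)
The plan is to verify directly the two properties that characterize a parity-check matrix, namely that $H$ kills every codeword and that its kernel has exactly the right dimension. First I would record the block shapes: since $G=[I_k\mid A]$ with $A$ of size $k\times(n-k)$, the transpose of $H=[A^t\mid I_{n-k}]$ is
\begin{equation*}
H^T=\begin{pmatrix} A \\ I_{n-k}\end{pmatrix},
\end{equation*}
an $n\times(n-k)$ matrix. The key computation is then the block product
\begin{equation*}
GH^T=[I_k\mid A]\begin{pmatrix} A \\ I_{n-k}\end{pmatrix}=I_k\,A+A\,I_{n-k}=A+A=\boldsymbol{0},
\end{equation*}
where the last equality holds because we work over $\mathbb{F}_2$ and hence $2A=\boldsymbol{0}$.

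From $GH^T=\boldsymbol{0}$ I would deduce the inclusion $\mathcal{C}\subseteq\{\boldsymbol{x}\in\mathbb{F}_2^n:H\boldsymbol{x}^T=\boldsymbol{0}\}$. Indeed, every codeword has the form $\boldsymbol{c}=\boldsymbol{x}G$ for some $\boldsymbol{x}\in\mathbb{F}_2^k$ by \eqref{generator_matrix}, so $H\boldsymbol{c}^T=H(\boldsymbol{x}G)^T=HG^T\boldsymbol{x}^T=(GH^T)^T\boldsymbol{x}^T=\boldsymbol{0}$. Thus each row of $G$ lies in $\ker H$, and since these rows span $\mathcal{C}$ and the kernel is a subspace, the whole code is contained in the kernel.

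For the reverse inclusion I would use a dimension count rather than a direct verification. The matrix $H$ is $(n-k)\times n$ and its last $n-k$ columns form the identity block $I_{n-k}$, so its rows are linearly independent and $\operatorname{rank}(H)=n-k$. By the rank–nullity theorem the solution space $\{\boldsymbol{x}:H\boldsymbol{x}^T=\boldsymbol{0}\}$ has dimension $n-(n-k)=k=\dim\mathcal{C}$. A subspace contained in another of the same finite dimension must coincide with it, which yields equality and hence the identity \eqref{check_matrix} with this $H$.

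I do not expect a serious obstacle here, as the argument is elementary linear algebra; the one point demanding care is the characteristic-two cancellation $A+A=\boldsymbol{0}$, which is exactly what allows the sign-free form $H=[A^t\mid I_{n-k}]$ (over a general field one would need $-A^t$). The only genuinely structural input is the presence of the identity block guaranteeing $\operatorname{rank}(H)=n-k$, and everything else follows from combining the containment with the dimension equality.
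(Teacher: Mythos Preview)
Your argument is correct and is precisely the standard proof: compute $GH^T=A+A=\boldsymbol{0}$ over $\mathbb{F}_2$, deduce $\mathcal{C}\subseteq\ker H$, and close with the rank--nullity count using the identity block in $H$. The paper itself does not supply a proof of this theorem; it is stated with a citation to \cite[Thm.~1.2.1]{Huffman2003}, and your write-up matches the argument one finds there.
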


In this paper, we present a graphical representation (\emph{Hasse diagram}) of a binary linear code using the ideas given in
Section 11.7 of \cite{Huffman2003}. Our main interest is to identify properties of the graph constructed:
we demonstrate that the graph defined here is a connected bipartite graph which no contain triangles. In the next section, we recall some basic concepts and results from Coding Theory and Graph Theory and we give our definition of the graph obtained from binary linear code. Finally, in Section \ref{sec3} we present our main results.

\section{Preliminaries}

In general, there are two basic process to do with a binary $[n,k]$-linear code $\mathcal{C}$: encoding and decoding.
The first one consists in convert a vector of $\mathbb{F}_2^k$ and get a codeword, this process can be done using
\eqref{generator_matrix}. In the other hand, to decode a received word it is necessary, in a first stage, to identify
whether it is a codeword or not. It can be solve by mean of \eqref{check_matrix}. But when we are sending a word it
could be appear some errors in the received word, so we need to detect them and (if it is possible) correct them. These
two properties can be estimated using the minimum distance of $\mathcal{C}$, see \cite[Thms. 2.5.6 and 2.5.10]{Ling2004}.

For an $[n,k,d]$-binary linear code $\mathcal{C}$, we can devise an algorithm using a table with $2^{n-k}$ rather than
$2^n$ elements where one can find the nearest codeword by looking up one of these $2^{n-k}$ entries. This general decoding
algorithm for linear codes is called \emph{syndrome decoding}. Because our code $\mathcal{C}$ is an elementary abelian subgroup
of the additive group of $\mathbb{F}_2^n$, its distinct cosets $\boldsymbol{x} + \mathcal{C}$ partition $\mathbb{F}_2^n$ into
$2^{n-k}$ sets of size $2^k$. Two vectors $\boldsymbol{x}$ and $\boldsymbol{y}$ belong to the same coset if and only if
$\boldsymbol{y}-\boldsymbol{x} \in \mathcal{C}$. We denote by
$\mathfrak{cl}(\mathcal{C})=\{\boldsymbol{x} + \mathcal{C}: \boldsymbol{x}\in \mathbb{F}_2^n\}$  the set of cosets of the code
$\mathcal{C}$.

The table with the $2^{n-k}$ elements can be constructed as follows: Let $H$ be a parity-check matrix for $\mathcal{C}$,
that is a generator matrix for the set $\mathcal{C}^{\perp}=\{\boldsymbol{x}\in \mathbb{F}_2^n: \boldsymbol{x}\cdot \boldsymbol{c}=0, \text{ for all $\boldsymbol{c}\in \mathcal{C}$}\}$. The \emph{syndrome} of a word $\boldsymbol{y} \in \mathbb{F}_2^n$ with respect to $H$ is the vector
$S(\boldsymbol{y}) = H\boldsymbol{y}^T \in \mathbb{F}_2^{n-k}$. Since the syndrome of a codeword is $\boldsymbol{0}$, if
$\boldsymbol{y_1}, \boldsymbol{y_2}\in \mathbb{F}_2^n$ are in the same coset of $\mathcal{C}$, then $\boldsymbol{y_1} - \boldsymbol{y_2}=\boldsymbol{c}\in \mathcal{C}$.
Therefore $S(\boldsymbol{y_1})=H(\boldsymbol{y_2}+\boldsymbol{c})^T=H\boldsymbol{y_2}^T+H\boldsymbol{c}^T=S(\boldsymbol{y_2})$. Conversely,
if $S(\boldsymbol{y_1})=S(\boldsymbol{y_2})$, then $H(\boldsymbol{y_2}-\boldsymbol{y_1})^T=\boldsymbol{0}$ and so $\boldsymbol{y_2}-\boldsymbol{y_1}\in \mathcal{C}$,
i.e., we have a way to test whether the word belongs to the code. Moreover, there is a one-to-one correspondence between cosets of
$\mathcal{C}$ and syndromes. In fact, let $\mathcal{C}$ be a binary linear code and  assume the codeword $\boldsymbol{v}$
is transmitted and the word $\boldsymbol{w}$ is received, resulting in the \emph{error pattern} (or error string)
$\boldsymbol{e} = \boldsymbol{w} - \boldsymbol{v} \in \boldsymbol{w} + \mathcal{C}$. Then $\boldsymbol{w} - \boldsymbol{e} = \boldsymbol{v} \in \mathcal{C}$,
so, the error pattern $\boldsymbol{e}$ and the received word $\boldsymbol{w}$ are in the same coset.

Since error patterns of small weight are the most likely to occur, nearest neighbor decoding works for a linear code $\mathcal{C}$ in the following manner. Upon receiving the word $\boldsymbol{w}$, we choose a word $\boldsymbol{e}$ of least weight in the coset $\boldsymbol{w} + \mathcal{C}$ and conclude that $\boldsymbol{v} = \boldsymbol{w}-\boldsymbol{e}$ was the codeword transmitted. To do this it is necessary to construct the (Slepian) \emph{standard array}, for which we need to find the weight of a coset which is the smallest weight of a vector in the coset. Any vector of this smallest weight in the coset is called a \emph{coset leader}. The zero vector is the unique coset leader of the code $\mathcal{C}$. More generally, every coset of weight at most $t = \lfloor(d-1)/2\rfloor$ has a unique coset leader.

There is a natural partial ordering $\preceq$ on the vectors in $\mathbb{F}_2^n$, which is defined as follows: for
$\boldsymbol{x},\boldsymbol{y}\in \mathbb{F}_2^n$, $\boldsymbol{x}\preceq \boldsymbol{y}$ provided that $\supp(\boldsymbol{x}) \subseteq \supp(\boldsymbol{y})$,
where $\supp(\boldsymbol{c})$ for $\boldsymbol{c}\in \mathbb{F}_2^n$ denote the {\it support} of the vector $\boldsymbol{c}$, i.e.,
the set of coordinates where $\boldsymbol{c}$ is non-zero. If $\boldsymbol{x}\preceq \boldsymbol{y}$, we will also say that
$\boldsymbol{y}$ covers $\boldsymbol{x}$. We now use this partial order on $\mathbb{F}_2^n$ to define a partial order, also
denoted $\preceq$, on the set of cosets of a binary linear code $\mathcal{C}$ of length $n$. If $\mathcal{C}_1$ and
$\mathcal{C}_2$ are two cosets of $\mathcal{C}$, then $\mathcal{C}_1 \preceq \mathcal{C}_2$ provided there are coset
leaders $\boldsymbol{x}_1$ of $\mathcal{C}_1$ and $\boldsymbol{x}_2$ of $\mathcal{C}_2$ such that $\boldsymbol{x}_1 \preceq \boldsymbol{x}_2$.
As usual, $\mathcal{C}_1 \prec\mathcal{C}_2$ means that $\mathcal{C}_1 \preceq \mathcal{C}_2$ but $\mathcal{C}_1\neq \mathcal{C}_2$.
Under this partial ordering the set of cosets of $\mathcal{C}$ has a unique minimal element, the code $\mathcal{C}$
itself. Since, the set $\mathfrak{cl}(\mathcal{C})$ is a  partially ordered set, \emph{poset} for short, with respect to
$\preceq$, we can use a Hasse diagram to represent the partial order on $\mathfrak{cl}(\mathcal{C})$.

Let  $\boldsymbol{x}\in \mathbb{F}_2^n$ and $r$ be a non-negative real number. The \emph{sphere of radius  $r$ and center in $\boldsymbol{x}$}
is the set $S(\boldsymbol{x},r)=\{\boldsymbol{y}\in \mathbb{F}_2^n: d(\boldsymbol{x},\boldsymbol{y})\leq r\}.$ The \emph{covering radius} of a code $\mathcal{C}$, denoted by $\rho = \rho(\mathcal{C})$, is the smallest integer
$s$ such that $\mathbb{F}_2^n$ is the union of the spheres of radius $s$ centered at the codewords of $\mathcal{C}$.
Actually, it can be proved that $$\rho(\mathcal{C})=\max_{\boldsymbol{x}\in \mathbb{F}_2^n}\min_{\boldsymbol{c}\in\mathcal{C}}d(\boldsymbol{x},\boldsymbol{c}).$$

Also, by  \cite[Thm. 11.1.2]{Huffman2003}, $\rho(\mathcal{C})$ is the largest value of the minimum weight of all cosets of $\mathcal{C}$, i.e. $\rho(\mathcal{C})=\max\{\wt(\boldsymbol{x}+\mathcal{C}): \boldsymbol{x}\in \mathbb{F}_2^n\}.$

A \emph{graph} $\Gamma$ with $n$ vertices is a pair $(V,E)$ where $V=\{v_1, v_2,\ldots ,v_n\}$ is the set of vertices and
$E \subseteq V\times V$ is the set of edges. Given two vertices $v_i$ and $v_j$, if $v_iv_j\in E$, then $v_i$ and
$v_j$ are said to be \emph{adjacent} or that $v_i$ and $v_j$ are \emph{neighbors}. In this case, $v_i$ and $v_j$ are said to be
the end vertices of the edge $v_iv_j$. If $v_iv_j\notin E$, then $v_i$ and $v_j$ are \emph{non-adjacent}. Furthermore, if
an edge $e$ has a vertex $v_i$ as an end vertex, we say that $v_i$ is incident with $e$. We denote the \emph{set of neighbors} or \emph{neighborhood} of the vertex $v_i$ by $\mathcal{N}(v_i)$,
i.e., $\mathcal{N}(v_i)=\{x\in V: v_ix\in E\}$, and the cardinality  of this set is the \emph{degree} of the vertex $v_i$,
which is denoted by $\deg(v_i)$, in other words, the $\deg(v_i)$ is the number of edges incident with $v_i$.

Given a binary linear code $\mathcal{C}$, we denote by  $\Gamma(\mathcal{C})=(V_{\mathcal{C}},E_{\mathcal{C}})$ the  graph
constructed on this wise: the set of vertices  $V_{\mathcal{C}}=\mathfrak{cl}(\mathcal{C})$  and $\mathcal{C}_1\mathcal{C}_2\in E_{\mathcal{C}}$
if $\mathcal{C}_1\prec \mathcal{C}_2$ and $\wt(\mathcal{C}_1)=\wt(\mathcal{C}_2)-1$. If $\mathcal{C}_1\mathcal{C}_2$ is an edge of $\Gamma(\mathcal{C})$, $\mathcal{C}_1$ is called a \emph{child} of $\mathcal{C}_2$, and $\mathcal{C}_2$ is a \emph{parent} of $\mathcal{C}_1$. Actually, the graph $\Gamma(\mathcal{C})$ is known as the \emph{Hasse diagram} of the poset $(\mathfrak{cl}(\mathcal{C}),\preceq)$.  It is clear that $|V_{\mathcal{C}}|=2^{n-k}$.
The main interest of this manuscript is to study the properties of the graph $\Gamma(\mathcal{C})$.

\begin{ejem}
\label{ejemplorelaciones}
	Let $\mathcal{C}$ be a $[5, 2, 3]$ binary linear code with generator matrix
	\begin{center}	
		$G=\begin{pmatrix}
			1 & 1 & 1 & 1 & 0 \\
			0 & 0 & 1 & 1 & 1	
		\end{pmatrix}.$
	\end{center}
	The cosets of $\mathcal{C}$ are
	\begin{align*}
		\mathcal{C}_0 &= 00000 + \mathcal{C},    &   \mathcal{C}_4 &= 00010 + \mathcal{C}, \\
		\mathcal{C}_1 &= 10000 + \mathcal{C},    &   \mathcal{C}_5 &= 00001 + \mathcal{C},\\
		\mathcal{C}_2 &= 01000 + \mathcal{C},    &   \mathcal{C}_6 &= 10100 + \mathcal{C} = 01010 + \mathcal{C},\\
		\mathcal{C}_3 &= 00100 + \mathcal{C},    &   \mathcal{C}_7 &= 10010 + \mathcal{C} = 01100 + \mathcal{C}.
    \end{align*}

The visual representation of the graph (Hasse diagram) $\Gamma({\mathcal{C}})$ is given in the Figure \ref{fig:1}.
	\begin{figure}[H]
		\centering
\hspace*{2.5cm}\includegraphics[scale=0.7]{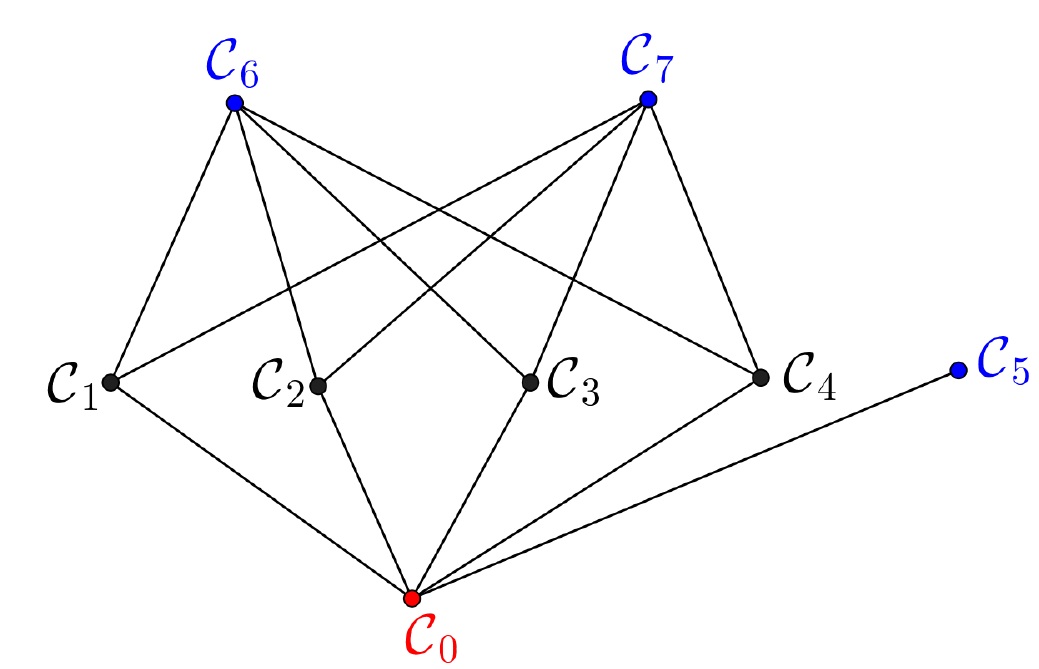}
		\caption{Hasse diagram of the code $\mathcal{C}$.}
		\label{fig:1}
	\end{figure}
\end{ejem}

\textbf{Remark.} If $\mathcal{C}_1$ and $\mathcal{C}_2$ are cosets of $\mathcal{C}$ with $\mathcal{C}_1 \prec \mathcal{C}_2$, then $\mathcal{C}_1$ is called a \emph{descendant} of $\mathcal{C}_2$, and
$\mathcal{C}_2$ is an \emph{ancestor} of $\mathcal{C}_1$. Note that the coset $\mathcal{C}_0=\boldsymbol{0}+\mathcal{C}$ is always a descendant of any coset of the linear code $\mathcal{C}$, in other words $\mathcal{C}_0$ is minimal in the poset $(\mathfrak{cl}(\mathcal{C}),\preceq)$. Always, this vertex will be represented with a red dot in the figures. A coset of $\mathcal{C}$ is called an \emph{orphan} whenever it has no parents, that is, it is a maximal element in the poset $(\mathfrak{cl}(\mathcal{C}),\preceq)$. These vertices will be denoted, in the figures, with a blue dot, see Figure \ref{fig:1}.

\begin{ejem}
Let $\mathcal{C}$ be a binary linear code with parameters $[6,2,4]$ with generator matrix
$$G= \begin{pmatrix}
		1 & 1 & 1 & 1 & 0 & 0\\
		0 & 0 & 1 & 1 & 1 & 1\\
	\end{pmatrix}.$$
	It can be verified that $\mathcal{C}$ has 16 cosets, its graph $\Gamma(\mathcal{C})$ is given in the Figure \ref{ejem_grafo4} and the cosets $\mathcal{C}_7= +\mathcal{C}$, $\mathcal{C}_{14}=+ \mathcal{C}$ and $\mathcal{C}_{15}=+ \mathcal{C}$ are its orphans.
	
	\begin{figure}[H]
		\centering
		\includegraphics[scale=0.2]{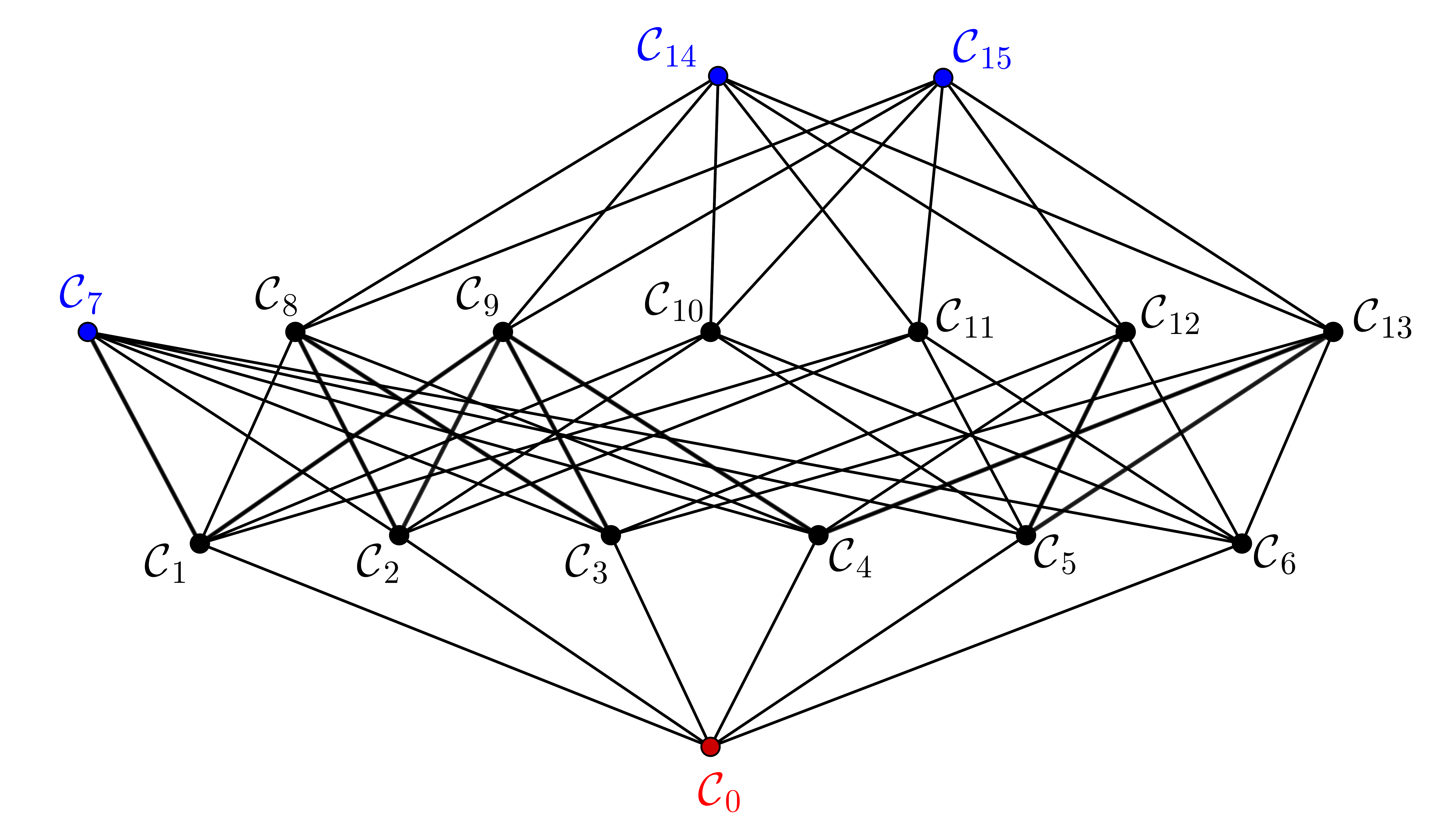}
		\caption{Hasse diagram of the code $\mathcal{C}$.}
		\label{ejem_grafo4}
\end{figure}
\end{ejem}

Let us recall other definitions of graph theory, the reader can find them in \cite{chartrand2013first,B98}. Given a
graph $\Gamma$, a \emph{$u-v$ walk} $W$ in $\Gamma$ is an alternating sequence of vertices and edges
$v_0e_1v_1e_2\ldots v_{n-1}e_nv_n$ beginning with $v_0=u$ and ending at $v_n=v$, in which each
edge is incident with the two vertices immediately preceding and following it, i.e., $e_i=v_{i-1}v_i, ~1\leq i\leq n$.
This walk joins $v_0$ and $v_n$, and may also be denoted $W=v_0v_1v_2\cdots v_n$. The number of edges in
a walk (including multiple occurrences of an edge) is called the \emph{length} of the walk. We say that $W$ is \emph{closed} if $v_0=v_n$ and
is \emph{open} otherwise. It is a \emph{trail} if all the edges are distinct, and a \emph{path} if all the vertices (and thus necessarily
all the edges) do not repeat. Hence every path is a trail, but the converse is not true. If the walk is closed, then it
is a \emph{cycle} provided its $n$ vertices are distinct and $n\geq 3$. Hence a cycle begins and ends at the same vertex but
repeats no edges. A cycle of length $n$ is called a \emph{$n$-cycle}. As is usual, we denote by $C_n$ the graph consisting
of a cycle with $n$ vertices and by $P_n$ a path with $n$ vertices; $C_3$ is often called a \emph{triangle}. A cycle of odd length
is called an \emph{odd cycle}; while, otherwise it is called an \emph{even cycle}. A graph without triangles is
called a \emph{triangle-free graph}.

A graph $\Gamma$ is \emph{connected} if every two vertices of $\Gamma$ are joined by a path. A graph $\Gamma$ that is not connected
is called \emph{disconnected}. A maximal connected subgraph of $\Gamma$ is called a \emph{connected component} or simply a component of
$\Gamma$. Thus, a disconnected graph has at least two components. Let $\Gamma=(V,E)$ be a connected graph and $u, v\in V$ two vertices.
The \emph{distance} between $u$ and $v$ is the smallest length of any $u-v$ path in $G$ and is denoted by $d_\Gamma(u,v)$. The
greatest distance between any two vertices of a connected graph $\Gamma$ is called the \emph{diameter} of $\Gamma$ and is denoted by
$\diam(\Gamma)$.

 A \emph{coloring} of a graph $\Gamma$, is an assignment of colors (elements of some set) to the vertices of $\Gamma$, one
color to each vertex, in such a way that adjacent vertices have distinct colors. The smallest number of colors in any coloring
of a graph $\Gamma$ is called the \emph{chromatic number} of $\Gamma$ and is denoted by $\chi(\Gamma)$. A graph $\Gamma$ such
that $\chi(\Gamma)=2$ is a \emph{bipartite graph}, i.e., the set of vertices of $\Gamma$ can be decomposed into two disjoint sets
such that no two graph vertices within the same set are adjacent.

\section{Main results}\label{sec3}
In this section we prove the main results about some properties of the graph $\Gamma(\mathcal{C})$, for a given binary linear code $\mathcal{C}$.

\begin{teor}
Let $\mathcal{C}$ be a binary linear code. Then $\Gamma(\mathcal{C})$ is connected.
\end{teor}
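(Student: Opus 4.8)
The plan is to reduce connectedness to the single statement that every vertex is joined by a path to the minimal coset $\mathcal{C}_0 = \boldsymbol{0} + \mathcal{C}$; once this is established, any two cosets are joined by concatenating their respective paths through $\mathcal{C}_0$. I would prove this by induction on the coset weight, the inductive step being the key claim: \emph{every coset $\mathcal{C}_i$ with $\wt(\mathcal{C}_i) = w \geq 1$ admits a child $\mathcal{C}'$ with $\wt(\mathcal{C}') = w - 1$}, i.e.\ an edge $\mathcal{C}'\mathcal{C}_i$ of $\Gamma(\mathcal{C})$ that strictly decreases the weight.

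To produce such a child I would fix a coset leader $\boldsymbol{x}$ of $\mathcal{C}_i$, so that $\wt(\boldsymbol{x}) = w \geq 1$, choose any coordinate $p \in \supp(\boldsymbol{x})$, and set $\boldsymbol{x}' = \boldsymbol{x} + \boldsymbol{e}_p$, where $\boldsymbol{e}_p$ denotes the standard unit vector supported on the coordinate $p$. Then $\boldsymbol{x}' \preceq \boldsymbol{x}$ and $\wt(\boldsymbol{x}') = w - 1$. Writing $\mathcal{C}' = \boldsymbol{x}' + \mathcal{C}$, we immediately get $\wt(\mathcal{C}') \leq w - 1$. The heart of the argument, and the step I expect to be the main obstacle, is to show that in fact $\wt(\mathcal{C}') = w - 1$, so that $\boldsymbol{x}'$ is itself a coset leader of $\mathcal{C}'$. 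I would argue by contradiction: if some $\boldsymbol{y} \in \mathcal{C}'$ had $\wt(\boldsymbol{y}) \leq w - 2$, then $\boldsymbol{y} + \boldsymbol{e}_p$ would lie in $\boldsymbol{x}' + \boldsymbol{e}_p + \mathcal{C} = \boldsymbol{x} + \mathcal{C} = \mathcal{C}_i$ and satisfy $\wt(\boldsymbol{y} + \boldsymbol{e}_p) \leq w - 1 < w$, contradicting $\wt(\mathcal{C}_i) = w$. Morally this is just the observation that flipping a single coordinate can change a coset's weight by at most one.

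With the claim in hand, $\boldsymbol{x}'$ and $\boldsymbol{x}$ are coset leaders of $\mathcal{C}'$ and $\mathcal{C}_i$ satisfying $\boldsymbol{x}' \preceq \boldsymbol{x}$, so by definition $\mathcal{C}' \preceq \mathcal{C}_i$; since their weights differ we have $\mathcal{C}' \prec \mathcal{C}_i$ together with $\wt(\mathcal{C}') = \wt(\mathcal{C}_i) - 1$, which is exactly the adjacency condition defining $E_{\mathcal{C}}$. Iterating the claim from an arbitrary coset $\mathcal{C}_i$ of weight $w$ produces a strictly weight-decreasing chain $\mathcal{C}_i = \mathcal{C}^{(w)} \succ \mathcal{C}^{(w-1)} \succ \cdots \succ \mathcal{C}^{(0)} = \mathcal{C}_0$, in which consecutive cosets are adjacent and all $w+1$ weights are distinct; hence the chain is a genuine path (no repeated vertices) from $\mathcal{C}_i$ to $\mathcal{C}_0$ in $\Gamma(\mathcal{C})$. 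As every vertex reaches $\mathcal{C}_0$, the graph $\Gamma(\mathcal{C})$ is connected.
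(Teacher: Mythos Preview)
Your proof is correct and shares the same core mechanism as the paper's: both arguments descend from a coset to a child by deleting one coordinate from the support of a coset leader, and the crucial point in each is that the resulting vector is again a coset leader. You prove this directly with the neat ``flip $\boldsymbol{e}_p$ back'' contradiction, whereas the paper invokes \cite[Corollary~11.7.7]{Huffman2003} for the same fact; your version is therefore more self-contained. The structural difference is in how the two cosets are joined: you always route through $\mathcal{C}_0$, while the paper distinguishes the case where one coset is a descendant of the other from the general case, and in the latter routes through a \emph{maximal-weight} common descendant rather than $\mathcal{C}_0$. For mere connectedness your simpler route is entirely sufficient; the paper's more refined choice is what allows it to read off the graph distance $d_\Gamma(\mathcal{C}_k,\mathcal{C}_l)$ in the corollary that follows, information your argument does not directly yield.
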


\begin{proof}
	Let $\mathcal{C}_k$ and $\mathcal{C}_l$ be two  cosets of $\mathcal{C}$, such that $\wt(\mathcal{C}_k)\leq \wt(\mathcal{C}_l)$, with $\boldsymbol{x}_k$ and $\boldsymbol{x}_l$ their coset leaders, respectively. Then we study the following cases.
	\begin{enumerate}[(i)]
		\item Suppose that $\mathcal{C}_k$ is a descendant of $\mathcal{C}_l$. Take $A_1=\supp(\boldsymbol{x}_l)\setminus \supp(\boldsymbol{x}_k)$, $t=|A_1|$, $i_1=\min A_1$, $\boldsymbol{x}_{i_1}=\boldsymbol{x}_l+\boldsymbol{e}_{i_1}$ and for $2\leq s\leq t$
		\begin{align*}
			A_s&=A_{s-1}\setminus \{i_{s-1}\},\\
			i_s&=\min{A_s}, \\
			\boldsymbol{x}_{i_s}&=\boldsymbol{x}_{i_{s-1}}+\boldsymbol{e}_{i_s}.
		\end{align*}
		Additionally, consider the cosets $\mathcal{C}_{i_s}=\boldsymbol{x}_{i_s}+\mathcal{C}$, for $1\leq s\leq t$. By \cite[Corollary 11.7.7]{Huffman2003},  $\boldsymbol{x}_{i_1}$ is a coset leader of a child of $\mathcal{C}_l$ and also $\mathcal{C}_{i_s}$ is a child of $\mathcal{C}_{i_{s-1}}$. Thus  $\mathcal{C}_l\mathcal{C}_{i_1} \cdots \mathcal{C}_{i_t}$ is a $\mathcal{C}_l-\mathcal{C}_k$ path, where $\mathcal{C}_{i_t}=\mathcal{C}_k$. Observe that this path is the shortest between these two vertices, but it is not necessarily unique. As a consequence we get that $d_{\Gamma}(\mathcal{C}_{k},\mathcal{C}_{l})=t$.
		
		\item Assume that $\mathcal{C}_k$ is not a descendant of $\mathcal{C}_l$. Take $$\Omega=\{\boldsymbol{y}+\mathcal{C}: \boldsymbol{y}+\mathcal{C}\prec \mathcal{C}_k \text{ and } \boldsymbol{y}+\mathcal{C}\prec \mathcal{C}_l\},$$ that is $\Omega$ is the set of the common descendants of $\mathcal{C}_k$ and $\mathcal{C}_l$. Since that $\mathcal{C}_0=\boldsymbol{0}+\mathcal{C}$ always is a descendant of any coset of $\mathcal{C}$, then $\Omega\neq \emptyset$.  We take $\mathcal{C}''$ such that $\wt(\mathcal{C}'')=\max\{\wt(\mathcal{C}'): \mathcal{C}'\in \Omega\}$.
		
Since $\mathcal{C}''$ is a descendant of $\mathcal{C}_k$, by (i) there exists a  $\mathcal{C}_k-\mathcal{C}''$ path in $\Gamma(\mathcal{C})$, we call it $\mathcal{C}_k\cdots\mathcal{C}''$. Similarly, there is a $\mathcal{C}''-\mathcal{C}_l$ path: $\mathcal{C}''\mathcal{C}''_1\cdots \mathcal{C}_l$. Thus by the choose of $\mathcal{C}''$ we find a path from $\mathcal{C}_k$ to $\mathcal{C}_l$ in $\Gamma(\mathcal{C})$:  $\mathcal{C}_k\cdots \mathcal{C}''\mathcal{C}''_1\cdots \mathcal{C}_l$.
	\end{enumerate}
\end{proof}

For a coset $\mathcal{C}_k$ of a binary linear $\mathcal{C}$, with $CL(\mathcal{C}_k)$ we denote a complete set of the coset leaders of $\mathcal{C}_k$. Now, from the previous proof, we get the following result.

\begin{cor}\label{cor2.1}
	Let $\mathcal{C}$ be a binary linear code, $\mathcal{C}_k$ and $\mathcal{C}_l$ cosets of $\mathcal{C}$ such that $\boldsymbol{x}_k\in CL(\mathcal{C}_k)$, $\boldsymbol{x}_l\in CL(\mathcal{C}_l)$. Then
	$$
	d_{\Gamma}(\mathcal{C}_k,\mathcal{C}_l)=
	\begin{cases}
		d(\boldsymbol{x}_k,\boldsymbol{x}_l), & \text{if  $\mathcal{C}_k\prec \mathcal{C}_l$ and $\boldsymbol{x}_k\prec \boldsymbol{x}_l$},\\
		d(\boldsymbol{x}_k,\boldsymbol{x}'')+d(\boldsymbol{x}'',\boldsymbol{x}_l), & \text{if  $\mathcal{C}_k\not\prec \mathcal{C}_l$ and $\boldsymbol{x}''\in CL(\mathcal{C}'')$, $\mathcal{C}''\in \Omega$}\\
			& \text{such that $\wt(\mathcal{C}'')=\max\{\wt(\mathcal{C}'): \mathcal{C}'\in \Omega\}$.}
	\end{cases}
	$$
\end{cor}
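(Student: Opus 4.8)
The plan is to read both distance formulas off the proof of the connectedness theorem and then supply the optimality arguments that the theorem only asserted. The one structural fact that drives everything is that $\Gamma(\mathcal{C})$ is \emph{graded by coset weight}: by the definition of $E_{\mathcal{C}}$ every edge joins cosets whose weights differ by exactly $1$, so along any walk the weight changes by $\pm 1$ at each step. Hence a $\mathcal{C}_k$--$\mathcal{C}_l$ path of length $m$ has some number of ``up-steps'' and ``down-steps'' summing to $m$ and with difference $\wt(\mathcal{C}_l)-\wt(\mathcal{C}_k)$; in particular $d_{\Gamma}(\mathcal{C}_k,\mathcal{C}_l)\geq |\wt(\mathcal{C}_l)-\wt(\mathcal{C}_k)|$, and since the weight difference is fixed, minimising the length is the same as minimising the number of down-steps. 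I will use this grading bound on each of the two cases.

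For the first case I would invoke part (i) of the previous proof verbatim: the constructed $\mathcal{C}_l$--$\mathcal{C}_k$ path has length $t=|A_1|=|\supp(\boldsymbol{x}_l)\setminus\supp(\boldsymbol{x}_k)|$. The hypothesis $\boldsymbol{x}_k\prec\boldsymbol{x}_l$ means $\supp(\boldsymbol{x}_k)\subseteq\supp(\boldsymbol{x}_l)$, so the coordinates where $\boldsymbol{x}_k$ and $\boldsymbol{x}_l$ differ are exactly those of $\supp(\boldsymbol{x}_l)\setminus\supp(\boldsymbol{x}_k)$, giving $d(\boldsymbol{x}_k,\boldsymbol{x}_l)=t=\wt(\boldsymbol{x}_l)-\wt(\boldsymbol{x}_k)$. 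As $\boldsymbol{x}_k,\boldsymbol{x}_l$ are coset leaders, $\wt(\mathcal{C}_l)-\wt(\mathcal{C}_k)=t$ as well, so the grading bound $d_{\Gamma}\geq t$ is attained by this path and $d_{\Gamma}(\mathcal{C}_k,\mathcal{C}_l)=d(\boldsymbol{x}_k,\boldsymbol{x}_l)$.

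For the second case I would follow part (ii): with $\mathcal{C}''$ a common descendant of maximal weight $w''=\max\{\wt(\mathcal{C}'):\mathcal{C}'\in\Omega\}$, the path splits into a descent $\mathcal{C}_k\cdots\mathcal{C}''$ followed by an ascent $\mathcal{C}''\cdots\mathcal{C}_l$. Applying the first case to the comparable pairs $\mathcal{C}''\prec\mathcal{C}_k$ and $\mathcal{C}''\prec\mathcal{C}_l$ gives the two legs lengths $d(\boldsymbol{x}_k,\boldsymbol{x}'')$ and $d(\boldsymbol{x}'',\boldsymbol{x}_l)$, so the path realises $d(\boldsymbol{x}_k,\boldsymbol{x}'')+d(\boldsymbol{x}'',\boldsymbol{x}_l)$ and this is the upper bound. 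Here I must first check that a single leader $\boldsymbol{x}''\in CL(\mathcal{C}'')$ can be chosen with $\boldsymbol{x}''\prec\boldsymbol{x}_k$ and $\boldsymbol{x}''\prec\boldsymbol{x}_l$ simultaneously; once this is granted each Hamming distance collapses to a weight difference, $d(\boldsymbol{x}_k,\boldsymbol{x}'')=\wt(\mathcal{C}_k)-w''$ and $d(\boldsymbol{x}'',\boldsymbol{x}_l)=\wt(\mathcal{C}_l)-w''$, so the stated value equals $\wt(\mathcal{C}_k)+\wt(\mathcal{C}_l)-2w''$ and is manifestly well defined.

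The hard part will be the matching lower bound in the second case, that no path beats this valley through $\mathcal{C}''$. I would take an arbitrary $\mathcal{C}_k$--$\mathcal{C}_l$ path, let $\mathcal{C}^{*}$ be a vertex on it of minimal weight $w^{*}$, and split at $\mathcal{C}^{*}$; applying the grading bound to each half gives length $\geq(\wt(\mathcal{C}_k)-w^{*})+(\wt(\mathcal{C}_l)-w^{*})=\wt(\mathcal{C}_k)+\wt(\mathcal{C}_l)-2w^{*}$. This matches the upper bound precisely when $w^{*}\leq w''$, so the whole weight of the argument rests on showing that the lowest vertex reached by the path is a genuine common descendant of $\mathcal{C}_k$ and $\mathcal{C}_l$, hence lies in $\Omega$ and has weight at most $w''$. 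This is exactly the point where bare grading is insufficient, since a vertex of minimal weight on an arbitrary path need not be comparable to both endpoints; establishing it requires the covering description of the order (Corollary 11.7.7 of \cite{Huffman2003}, deletion of a support coordinate of a coset leader) to argue that a path linking the incomparable cosets $\mathcal{C}_k$ and $\mathcal{C}_l$ cannot avoid descending into their common lower set $\Omega$, whose top weight is $w''$. Once $w^{*}\leq w''$ is secured, the lower bound equals $\wt(\mathcal{C}_k)+\wt(\mathcal{C}_l)-2w''$, matching the path through $\mathcal{C}''$ and closing the proof.
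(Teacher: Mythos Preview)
The paper gives no separate proof; it simply records the corollary as a consequence of the connectedness proof. For the first case, that proof asserts (``Observe that this path is the shortest'') without justification; your grading argument via the $\pm 1$ weight change along edges is exactly the missing justification, and that case is complete. For the second case, however, the connectedness proof only \emph{constructs} a path through $\mathcal{C}''$ and never claims it is shortest, so the paper's derivation of the formula is at best heuristic.

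You follow the same outline and honestly flag the two points the paper elides: (a) that a single leader $\boldsymbol{x}''\in CL(\mathcal{C}'')$ must satisfy $\boldsymbol{x}''\prec\boldsymbol{x}_k$ and $\boldsymbol{x}''\prec\boldsymbol{x}_l$ simultaneously for the right-hand side even to be well defined, and (b) that the lower bound requires every $\mathcal{C}_k$--$\mathcal{C}_l$ path to descend into $\Omega$. Your diagnosis of (b) is correct --- a path may go up from $\mathcal{C}_k$ before going down, so its minimum-weight vertex need not be comparable to $\mathcal{C}_k$ at all, and bare grading gives only $d_\Gamma\geq|\wt(\mathcal{C}_k)-\wt(\mathcal{C}_l)|$, which is too weak --- but you do not resolve it; the appeal to \cite[Corollary 11.7.7]{Huffman2003} is a gesture, not an argument. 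So your proposal has a genuine gap in the second case, but it is precisely the gap the paper itself leaves open; your treatment is strictly more careful than the paper's.
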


Since $\mathcal{C}_0$ is a descendant of any coset of $\mathcal{C}$, if $\mathcal{C}_l$ is an orphan such that $\rho(\mathcal{C})=\wt(\mathcal{C}_l)$, that is $\mathcal{C}_l$ is a coset with the largest Hamming weight, see  \cite[Theorem 11.1.2]{Huffman2003} then we get the following result.

\begin{cor}
	Let  $\mathcal{C}$ be a binary linear code. Then  $\diam(\Gamma(\mathcal{C}))\geq \rho(\mathcal{C})$.
\end{cor}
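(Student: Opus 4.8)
The plan is to exhibit a single pair of vertices in $\Gamma(\mathcal{C})$ whose distance already equals $\rho(\mathcal{C})$, and then invoke the definition of diameter as the maximum such distance. The natural candidates are the minimal vertex $\mathcal{C}_0 = \boldsymbol{0} + \mathcal{C}$ and a coset of maximal weight, and the whole argument reduces to a direct application of Corollary \ref{cor2.1}.

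First I would use the characterization $\rho(\mathcal{C}) = \max\{\wt(\boldsymbol{x} + \mathcal{C}) : \boldsymbol{x} \in \mathbb{F}_2^n\}$ recalled above to pick a coset $\mathcal{C}_l$ with $\wt(\mathcal{C}_l) = \rho(\mathcal{C})$, and let $\boldsymbol{x}_l$ be one of its coset leaders, so that $\wt(\boldsymbol{x}_l) = \rho(\mathcal{C})$. Such a coset is an orphan: any parent of $\mathcal{C}_l$ would, by the edge definition of $\Gamma(\mathcal{C})$, be a coset of weight $\wt(\mathcal{C}_l)+1 = \rho(\mathcal{C})+1$, which exceeds the maximal coset weight and hence cannot exist.

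Next I would compute $d_{\Gamma}(\mathcal{C}_0, \mathcal{C}_l)$ from Corollary \ref{cor2.1}. Since $\mathcal{C}_0$ is a descendant of every coset of $\mathcal{C}$, we have $\mathcal{C}_0 \prec \mathcal{C}_l$ (assuming $\rho(\mathcal{C}) \geq 1$; the case $\rho(\mathcal{C}) = 0$ is trivial, as then $\mathbb{F}_2^n = \mathcal{C}$ and $\diam(\Gamma(\mathcal{C})) = 0$). The zero vector is the unique coset leader of $\mathcal{C}_0$, and plainly $\supp(\boldsymbol{0}) = \emptyset \subseteq \supp(\boldsymbol{x}_l)$ with $\boldsymbol{0} \neq \boldsymbol{x}_l$, so $\boldsymbol{0} \prec \boldsymbol{x}_l$. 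The first case of Corollary \ref{cor2.1} then applies and yields
$$
d_{\Gamma}(\mathcal{C}_0, \mathcal{C}_l) = d(\boldsymbol{0}, \boldsymbol{x}_l) = \wt(\boldsymbol{x}_l) = \rho(\mathcal{C}).
$$

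Finally, since the diameter is the greatest distance between any two vertices of $\Gamma(\mathcal{C})$, it is at least the distance of this particular pair, whence $\diam(\Gamma(\mathcal{C})) \geq d_{\Gamma}(\mathcal{C}_0, \mathcal{C}_l) = \rho(\mathcal{C})$. I do not expect any genuine obstacle here: the only point requiring care is verifying the hypotheses of the first case of Corollary \ref{cor2.1}, namely that $\mathcal{C}_0 \prec \mathcal{C}_l$ holds with coset leaders satisfying $\boldsymbol{0} \prec \boldsymbol{x}_l$, which is immediate from the minimality of $\mathcal{C}_0$ in the poset. The inequality is one-directional precisely because there may exist pairs of cosets that are incomparable in $\preceq$, whose graph distance routes through a common descendant and can therefore exceed $\rho(\mathcal{C})$; this is why the statement asserts $\geq$ rather than equality.
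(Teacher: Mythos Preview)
Your proposal is correct and follows essentially the same approach as the paper: pick $\mathcal{C}_0$ and a coset $\mathcal{C}_l$ of maximal weight $\rho(\mathcal{C})$, then invoke the first case of Corollary~\ref{cor2.1} with the coset leaders $\boldsymbol{0}\prec\boldsymbol{x}_l$ to obtain $d_\Gamma(\mathcal{C}_0,\mathcal{C}_l)=\rho(\mathcal{C})$. Your write-up is in fact more careful than the paper's one-line justification, explicitly verifying the hypotheses of Corollary~\ref{cor2.1} and disposing of the trivial case $\rho(\mathcal{C})=0$.
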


The problem of deciding whether a given graph $\Gamma = (V, E)$ contains a
$k$-cycle is among the most natural and easily stated algorithmic graph
problems. In particular, the \emph{triangle finding problem} is the problem of determining whether a graph is triangle-free or not. It is possible to determine if a given graph with $m$ edges is triangle-free in time $O(m^{1.41})$, see \cite[Thm. 3.5]{AYZ97}. In the following result, we solve this problem to the family of  graphs $\Gamma(\mathcal{C})$.

\begin{teor}\label{triangle_free}
Let $\mathcal{C}$ be a binary linear code. Then $\Gamma(\mathcal{C})$ is a triangle-free graph.
\end{teor}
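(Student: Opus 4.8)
The plan is to argue by contradiction, exploiting the fact that the edge relation of $\Gamma(\mathcal{C})$ is ``graded'' by the coset weight: by definition, $\mathcal{C}_1\mathcal{C}_2\in E_{\mathcal{C}}$ forces $\wt(\mathcal{C}_1)=\wt(\mathcal{C}_2)-1$, so every edge joins two cosets whose weights differ by exactly $1$. First I would suppose, toward a contradiction, that $\Gamma(\mathcal{C})$ contains a triangle, i.e.\ three pairwise adjacent vertices $\mathcal{C}_a$, $\mathcal{C}_b$, $\mathcal{C}_c$, and record the three integers $\wt(\mathcal{C}_a)$, $\wt(\mathcal{C}_b)$, $\wt(\mathcal{C}_c)$.

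The crux is then a one-line combinatorial observation about these three weights. Since each of the three edges of the triangle connects cosets whose weights differ by exactly $1$, any two of the three weights must be distinct (a difference of $1$ rules out equality); hence the three values are pairwise distinct integers. Relabel them so that $\wt(\mathcal{C}_a)<\wt(\mathcal{C}_b)<\wt(\mathcal{C}_c)$. The edge between the extreme pair gives $|\wt(\mathcal{C}_c)-\wt(\mathcal{C}_a)|=1$, whereas the other two edges force $\wt(\mathcal{C}_b)-\wt(\mathcal{C}_a)=1$ and $\wt(\mathcal{C}_c)-\wt(\mathcal{C}_b)=1$, so that $\wt(\mathcal{C}_c)-\wt(\mathcal{C}_a)=2$. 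This contradiction shows no triangle can exist, and therefore $\Gamma(\mathcal{C})$ is triangle-free.

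I do not expect any genuine obstacle here: the statement is essentially the standard fact that the Hasse diagram of a graded poset, whose covering edges link only consecutive rank levels, cannot contain $C_3$. The only points that deserve an explicit word are that the weight $\wt(\mathcal{C}_i)$ is a well-defined integer attached to each vertex (it is, being the minimum weight in the coset) and that adjacency genuinely forces distinct weights; both follow immediately from the definition of $E_{\mathcal{C}}$ given before Example~\ref{ejemplorelaciones}. One could package the argument as showing that $\wt$ is a proper $2$-coloring-compatible rank function, which simultaneously foreshadows the bipartiteness of $\Gamma(\mathcal{C})$, but for triangle-freeness the parity/rank argument above is the most direct route.
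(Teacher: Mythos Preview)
Your argument is correct and follows essentially the same route as the paper: both proofs exploit that every edge of $\Gamma(\mathcal{C})$ joins cosets whose weights differ by exactly $1$, and derive a contradiction from the existence of three pairwise adjacent vertices. The paper phrases this as a short case analysis on which vertex is a child of which, whereas you order the three weights and compare the extreme pair, but the content is identical.
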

\begin{proof}
Assume that there is a triangle in $\Gamma(\mathcal{C})$ with vertices $\mathcal{C}_k$, $\mathcal{C}_q$ and $\mathcal{C}_m$. Suppose that $\mathcal{C}_k\mathcal{C}_q \in E_{\mathcal{C}}$. Without lost of generality, suppose that $\mathcal{C}_k$ is a child of $\mathcal{C}_q$. Since $\mathcal{C}_k\mathcal{C}_m\in E_{\mathcal{C}}$, then either $\mathcal{C}_k$ is a child of $\mathcal{C}_m$ or $\mathcal{C}_m$ is a child of $\mathcal{C}_k$.

If  $\mathcal{C}_k$ is a child of $\mathcal{C}_m$, then  $\wt(\mathcal{C}_m)=\wt(\mathcal{C}_q)$ and then $\mathcal{C}_q=\mathcal{C}_m$, which is a contradiction.	On the other hand, if  $\mathcal{C}_m$ is a child of $\mathcal{C}_k$, then $\wt(\mathcal{C}_m)=\wt(\mathcal{C}_k)-1$, furthermore $\wt(\mathcal{C}_k)=\wt(\mathcal{C}_q)-1$, thus $\wt(\mathcal{C}_m)=\wt(\mathcal{C}_q)-2$, thus there no exists an edge between $\mathcal{C}_m$ and $\mathcal{C}_q$, so again we obtain a contradiction. Consequently, $\mathcal{C}_k$ and  $\mathcal{C}_m$ are not adjacent vertices in  $\Gamma(\mathcal{C})$, and therefore it is a triangle-free graph.
\end{proof}

The study of the chromatic number of triangle‐free graphs is a classic topic, which has been studied extensively for several researchers and from many perspectives (for instance algebraic, probabilistic, quantitative Ramsey theory and algorithmic), see \cite{davies2020coloring} for more references about this problem.  In fact, triangle-free graphs can have large chromatic numbers, see \cite[Thm. 10.10]{chartrand2013first}. The next theorem demonstrates that if $\mathcal{C}$ is a proper vectorial subspace, then $\chi(\Gamma(\mathcal{C}))=2$.

\begin{teor}
Let $\mathcal{C}\subset \mathbb{F}_2^n$ be a binary linear code. Then $\Gamma(\mathcal{C})$ is a  bipartite graph.
\end{teor}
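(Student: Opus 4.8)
The plan is to exhibit an explicit two-coloring of the vertices of $\Gamma(\mathcal{C})$ and to verify that adjacent vertices always receive different colors. The natural invariant to color by is the weight of a coset: for a coset $\mathcal{C}_i$, assign it color $\wt(\mathcal{C}_i) \bmod 2$. The decisive structural fact I would invoke is the edge condition in the definition of $\Gamma(\mathcal{C})$: if $\mathcal{C}_1\mathcal{C}_2 \in E_{\mathcal{C}}$, then by construction $\wt(\mathcal{C}_1) = \wt(\mathcal{C}_2) - 1$. Hence the two weights differ by exactly one, so they have opposite parities, and the coloring just described is proper. This immediately yields $\chi(\Gamma(\mathcal{C})) \leq 2$.

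To conclude that $\chi(\Gamma(\mathcal{C})) = 2$ rather than $1$, I would observe that whenever $\mathcal{C}$ is a proper subspace of $\mathbb{F}_2^n$ there exists at least one nontrivial coset, and in fact the graph has at least one edge. Concretely, since $\mathcal{C} \subsetneq \mathbb{F}_2^n$ there is a vector $\boldsymbol{x}\notin \mathcal{C}$, so some coset $\mathcal{C}_1 = \boldsymbol{x}+\mathcal{C}$ has positive weight; choosing such a coset of minimal positive weight and applying the descent from the previous connectedness argument produces a child of $\mathcal{C}_1$, giving an edge. A graph with at least one edge cannot be properly $1$-colored, so the chromatic number is exactly $2$, which is precisely the definition of bipartite given in the Preliminaries.

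Equivalently, and perhaps more cleanly, I would define the bipartition directly: let $V_0 = \{\mathcal{C}_i : \wt(\mathcal{C}_i) \text{ is even}\}$ and $V_1 = \{\mathcal{C}_i : \wt(\mathcal{C}_i) \text{ is odd}\}$. These sets are disjoint and cover $V_{\mathcal{C}}$, and the edge condition forces every edge to join a vertex of $V_0$ to a vertex of $V_1$, so no edge lies within a single part. This exhibits $\Gamma(\mathcal{C})$ as a bipartite graph in the partition-based sense, and since at least one edge exists (so the two parts are genuinely needed), $\chi(\Gamma(\mathcal{C}))=2$.

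I do not expect any serious obstacle here, since the entire argument is essentially forced by the defining edge relation $\wt(\mathcal{C}_1)=\wt(\mathcal{C}_2)-1$. The only point requiring a word of care is the lower bound $\chi \geq 2$: one must rule out the degenerate possibility that $\Gamma(\mathcal{C})$ has no edges at all, which is why the hypothesis $\mathcal{C} \subsetneq \mathbb{F}_2^n$ (a \emph{proper} subspace) is used to guarantee a nontrivial coset and hence an edge. This is also consistent with the fact, already proved, that $\Gamma(\mathcal{C})$ is triangle-free and connected: triangle-freeness is automatic for any bipartite graph, so the bipartite result subsumes the earlier theorem, and connectedness ensures the graph is a single bipartite component rather than a disjoint union.
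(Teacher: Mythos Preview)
Your proposal is correct and follows essentially the same approach as the paper: both define the bipartition by the parity of the coset weight and use the edge condition $\wt(\mathcal{C}_1)=\wt(\mathcal{C}_2)-1$ to conclude that every edge crosses between the two parts. You are somewhat more careful than the paper in justifying the lower bound $\chi(\Gamma(\mathcal{C}))\geq 2$ by explicitly producing an edge from the hypothesis $\mathcal{C}\subsetneq\mathbb{F}_2^n$; the paper only remarks that $|V_{\mathcal{C}}|\geq 2$.
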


\begin{proof}
Since $\mathcal{C}\subset \mathbb{F}_2^n$, it is clear that  $|V_{\mathcal{C}}|\geq 2$. Consider $U$ and $V$, the set of cosets of $\mathcal{C}$ of odd weight and even weight, respectively. It is easy to prove that $V_{\mathcal{C}}=U\cup V$ and $U\cap V=\emptyset$. Now, by the definition of $\Gamma(\mathcal{C})$, the vertices of even weight are adjacent uniquely with vertices of odd weight and vice versa. Therefore, any edge in $E_{\mathcal{C}}$ connect nodes from $V_1$ with elements in $V_2$, but not connect vertices in the same set of vertices and the result follows.
\end{proof}
It is known that a nontrivial graph is bipartite  if and only if  it does not contains no odd cycles, see \cite[Theorem 1.12]{chartrand2013first}. Thus the last result generalize Theorem \ref{triangle_free}.

For $n\in \mathbb{Z}^+$, let $S_n$ denote the \emph{symmetric group on $n$ letters}. For   $\pi\in S_n$, a permutation on the coordinates of $\mathbb{F}_2^n$ is a function on $\mathbb{F}_2^n$ defined by $\mathcal{P}(u_1\ldots u_n)=u_{\pi(1)}\ldots u_{\pi(n)}$.  We remember that two binary linear codes $\mathcal{C}$ and $\mathcal{D}$ are \emph{permutation equivalent} if there is a permutation on the coordinates of $\mathbb{F}_2^n$ which sends $\mathcal{C}$ to $\mathcal{D}$, i.e. there exists a permutation $\pi \in S_n$ such that for any codeword $\boldsymbol{v}\in \mathcal{D}$ there exists a codeword $\boldsymbol{u}\in \mathcal{C}$ such that $\boldsymbol{v}=\mathcal{P}(\boldsymbol{u})$. Since that a permutation on the coordinates of $\mathbb{F}_2^n$ preserves the weight of any word, it can be proved that two permutation equivalent binary linear codes have the same parameters, thus they can be considered as ``the same code''. Deciding whether two codes are permutation equivalent codes is known as the \emph{Code equivalence problem}. This problem has been extensively studied in the last decades, see \cite{PR97,sendrier2000finding} and the references therein.

Recall that two graphs are \emph{isomorphic} if there is a correspondence between their vertex sets that preserves adjacency, i.e.,
$\Gamma=(V,E)$ is isomorphic to $\Gamma'=(V',E')$, denoted by $\Gamma\simeq \Gamma'$, if there is a a bijection $\varphi$ from
$V$ to $V'$ such that $xy\in E$ if and only if $\varphi(x)\varphi(y)\in E'$.  The \emph{graph isomorphism problem} requests to determine whether two given graphs are isomorphic. It is well known that deciding if two graphs are isomorphic is an algorithmic problem that has been widely studied, see \cite{Babai2018}.

\begin{teor}\label{grafos_isomorfos}
Let $\mathcal{C}$ and $\mathcal{D}$ two linear binary codes. If $\mathcal{C}$ and $\mathcal{D}$ are permutation equivalent codes, then $\Gamma(\mathcal{C})\simeq \Gamma(\mathcal{D})$.
\end{teor}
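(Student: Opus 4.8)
The plan is to construct an explicit graph isomorphism $\varphi\colon\Gamma(\mathcal{C})\to\Gamma(\mathcal{D})$ induced by the coordinate permutation $\pi\in S_n$ that witnesses the permutation equivalence of the codes. First I would recall that the associated coordinate map $\mathcal{P}\colon\mathbb{F}_2^n\to\mathbb{F}_2^n$, defined by $\mathcal{P}(u_1\ldots u_n)=u_{\pi(1)}\ldots u_{\pi(n)}$, is a linear bijection of $\mathbb{F}_2^n$ that sends $\mathcal{C}$ onto $\mathcal{D}$. Because $\mathcal{P}$ is additive and maps $\mathcal{C}$ bijectively to $\mathcal{D}$, it carries cosets to cosets: for every $\boldsymbol{x}\in\mathbb{F}_2^n$ we have $\mathcal{P}(\boldsymbol{x}+\mathcal{C})=\mathcal{P}(\boldsymbol{x})+\mathcal{D}$, and distinct cosets of $\mathcal{C}$ are sent to distinct cosets of $\mathcal{D}$. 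This yields a well-defined bijection $\varphi\colon\mathfrak{cl}(\mathcal{C})\to\mathfrak{cl}(\mathcal{D})$ on the vertex sets, namely $\varphi(\boldsymbol{x}+\mathcal{C})=\mathcal{P}(\boldsymbol{x})+\mathcal{D}$.

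The next step is to verify that $\varphi$ respects all the combinatorial data used to define the edges of $\Gamma(\mathcal{C})$, which reduces to checking that $\mathcal{P}$ preserves Hamming weight and the support partial order $\preceq$. Since $\pi$ merely permutes coordinates, $\mathcal{P}$ preserves the Hamming weight of every vector, so it sends a coset leader of $\boldsymbol{x}+\mathcal{C}$ to a vector of the same weight in $\mathcal{P}(\boldsymbol{x})+\mathcal{D}$; in fact $\mathcal{P}$ maps the full set of minimum-weight vectors of a coset of $\mathcal{C}$ onto the full set of minimum-weight vectors of the image coset. Consequently $\wt(\boldsymbol{x}+\mathcal{C})=\wt(\varphi(\boldsymbol{x}+\mathcal{C}))$ for every coset. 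Moreover, because $\pi$ acts on supports by the set bijection $\supp(\mathcal{P}(\boldsymbol{u}))=\pi^{-1}(\supp(\boldsymbol{u}))$, the relation $\boldsymbol{u}_1\preceq\boldsymbol{u}_2$ holds if and only if $\mathcal{P}(\boldsymbol{u}_1)\preceq\mathcal{P}(\boldsymbol{u}_2)$. Combining these two facts, $\mathcal{P}$ transports the coset partial order $\preceq$ on $\mathfrak{cl}(\mathcal{C})$ isomorphically onto that of $\mathfrak{cl}(\mathcal{D})$.

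Finally I would assemble these observations to show that $\varphi$ preserves adjacency in both directions. Recall that $\mathcal{C}_1\mathcal{C}_2$ is an edge of $\Gamma(\mathcal{C})$ exactly when $\mathcal{C}_1\prec\mathcal{C}_2$ and $\wt(\mathcal{C}_1)=\wt(\mathcal{C}_2)-1$. Using weight preservation and order preservation, $\mathcal{C}_1\prec\mathcal{C}_2$ is equivalent to $\varphi(\mathcal{C}_1)\prec\varphi(\mathcal{C}_2)$, and $\wt(\mathcal{C}_1)=\wt(\mathcal{C}_2)-1$ is equivalent to $\wt(\varphi(\mathcal{C}_1))=\wt(\varphi(\mathcal{C}_2))-1$; hence $\mathcal{C}_1\mathcal{C}_2\in E_{\mathcal{C}}$ if and only if $\varphi(\mathcal{C}_1)\varphi(\mathcal{C}_2)\in E_{\mathcal{D}}$. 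Since $\varphi$ is a bijection of vertex sets preserving adjacency in both directions, it is a graph isomorphism, so $\Gamma(\mathcal{C})\simeq\Gamma(\mathcal{D})$.

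I expect the only genuinely delicate point to be the claim that $\mathcal{P}$ sends coset leaders to coset leaders; the subtlety is that the defining order $\preceq$ on cosets is phrased in terms of the \emph{existence} of comparable coset leaders, so one must make sure that $\mathcal{P}$ respects not just individual leaders but the whole minimum-weight structure of each coset. This follows cleanly once one notes that $\mathcal{P}$ is a weight-preserving bijection mapping $\boldsymbol{x}+\mathcal{C}$ bijectively onto $\mathcal{P}(\boldsymbol{x})+\mathcal{D}$, so the minimum weight and the set of minimizers correspond exactly; everything else is a routine transfer of the poset structure through $\mathcal{P}$.
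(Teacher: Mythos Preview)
Your proposal is correct and follows essentially the same approach as the paper: define $\varphi(\boldsymbol{x}+\mathcal{C})=\mathcal{P}(\boldsymbol{x})+\mathcal{D}$ via the coordinate permutation $\mathcal{P}$, then use that $\mathcal{P}$ preserves Hamming weight and the support order $\preceq$ to transport coset leaders and edges. Your write-up is in fact more careful than the paper's, since you explicitly verify well-definedness, bijectivity on cosets, that $\mathcal{P}$ carries the entire set of minimum-weight vectors of each coset onto that of the image coset, and adjacency preservation in both directions, whereas the paper's proof only sketches the forward direction.
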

\begin{proof}
By the assumption there exists  $\mathcal{P}$ a permutation on the coordinates of $\mathbb{F}_2^n$ such that $\mathcal{D}=\{\mathcal{P}(\boldsymbol{u}):\boldsymbol{u}\in \mathcal{C}\}$. Take $\boldsymbol{u}$ a coset leader of $\boldsymbol{u}+\mathcal{C}$. Since $\mathcal{P}$ preserves Hamming distance, we have that for any $\boldsymbol{c}\in \mathcal{C}$, $\wt(\boldsymbol{u}+\boldsymbol{c})=\wt(\mathcal{P}(\boldsymbol{u})+\boldsymbol{v})$, where $\boldsymbol{v}=\mathcal{P}(\boldsymbol{c})$. Then $\mathcal{P}(\boldsymbol{u})$ is a leader of its coset in $\mathcal{D}$, that is $\mathcal{P}(\boldsymbol{u})+\mathcal{D}$ is a vertex of $\Gamma(\mathcal{D})$.

Let $\mathcal{C}_1=\boldsymbol{u}+\mathcal{C}$ and $\mathcal{C}_2=\boldsymbol{u}'+\mathcal{C}$ be two cosets of $\mathcal{C}$.  Now, if $\mathcal{C}_1\mathcal{C}_2$ is an edge of $\Gamma(\mathcal{C})$, then $\mathcal{C}_1\prec \mathcal{C}_2$. So  there is a coset leader $\boldsymbol{u}''$ of $\mathcal{C}_2$ such that $\boldsymbol{u}\prec \boldsymbol{u}''$; i.e. $\supp(\boldsymbol{u})\subset \supp(\boldsymbol{u}'')$. This last statement implies that $\supp(\mathcal{P}(\boldsymbol{u}))\subset \supp(\mathcal{P}(\boldsymbol{u}''))$ or equivalently we have that $\mathcal{P}(\boldsymbol{u})+\mathcal{D}$ and $\mathcal{P}(\boldsymbol{u}'')+\mathcal{D}=\mathcal{P}(\boldsymbol{u}')+\mathcal{D}$ are adjacent in $\Gamma(\mathcal{D})$. So the function $\varphi: V_{\mathcal{C}}\rightarrow V_{\mathcal{D}}$ given by $\varphi(\boldsymbol{u}+\mathcal{C})=\mathcal{P}(\boldsymbol{u})+\mathcal{D}$ is a graph isomorphism.
\end{proof}

\begin{ejem}
The reciprocal of Theorem \ref{grafos_isomorfos}  is false. In fact, consider the binary linear codes $\mathcal{C}$  and $\mathcal{D}$ with generator matrices $$M=\begin{pmatrix}
     	1 & 0 & 0 & 0 & 1 & 1\\
     	0 & 1 & 0 & 1 & 0 & 1\\
     	0 & 0 & 1 & 1 & 1 & 0\\
     \end{pmatrix} \text{ and }  N= \begin{pmatrix}
		1 & 0 & 0 & 0 & 0 & 1 & 1 & 0\\
		0 & 1 & 0 & 0 & 0 & 1 & 0 & 1\\
		0 & 0 & 1 & 0 & 0 & 0 & 1 & 0\\
		0 & 0 & 0 & 1 & 0 & 0 & 0 & 1\\
		0 & 0 & 0 & 0 & 1 & 1 & 1 & 1\\
	\end{pmatrix},$$
	respectively.  It can be verified that $d(\mathcal{C})=3$ and $d(\mathcal{D})=2$, fact that implies that these codes are non permutation equivalent codes. However, their Hasse diagrams are isomorphic, see Figure \ref{fig_grafos_isomorfos}.

\begin{figure}[H]
	\centering
	\hspace*{2.5cm}\includegraphics[scale=0.25]{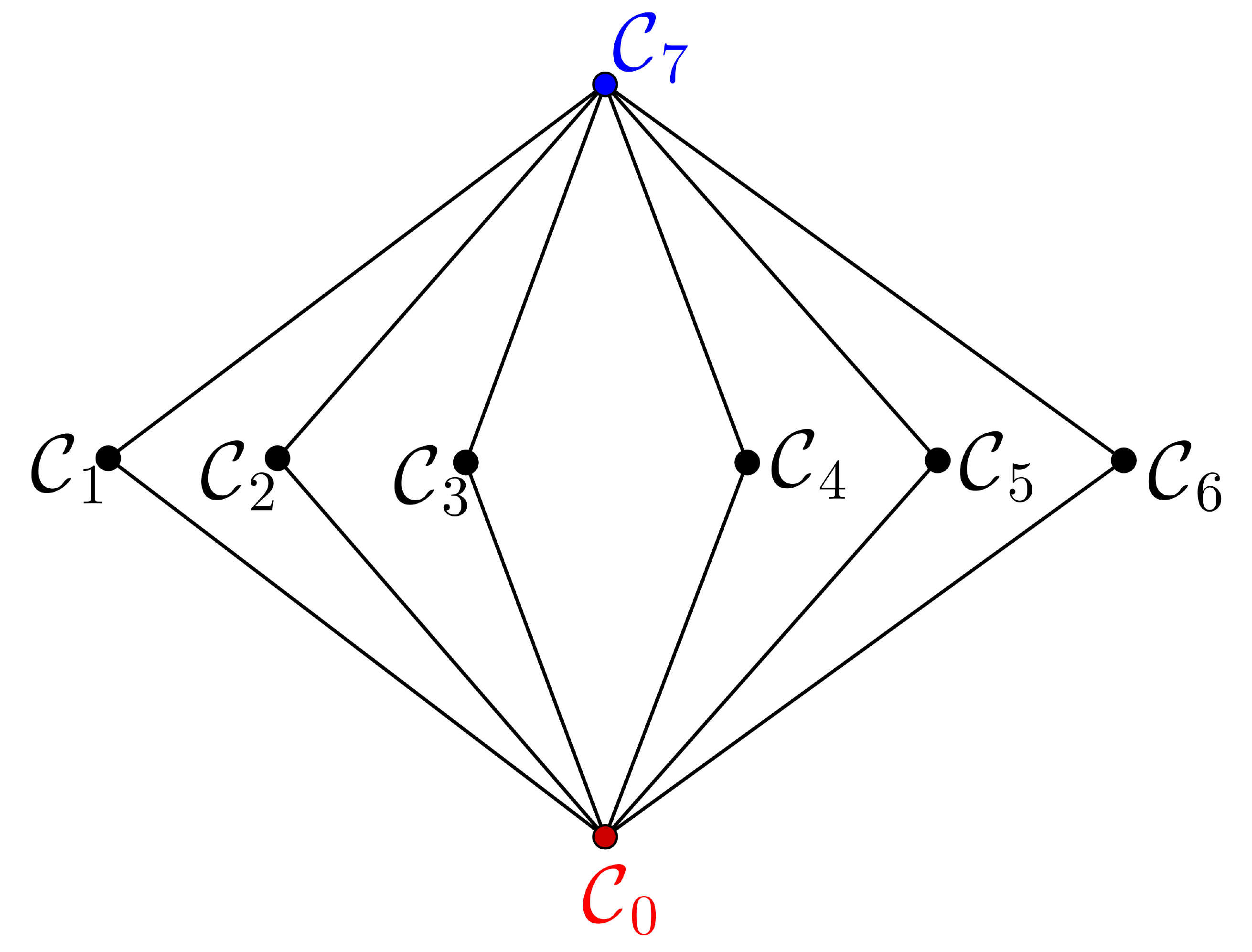}
	\caption{Hasse diagram of $\mathcal{C}$ and $\mathcal{D}$
.}
	\label{fig_grafos_isomorfos}
\end{figure}
\end{ejem}

Theorem \ref{grafos_isomorfos} and the above example motivate us to ask when two non permutation equivalent binary codes $\mathcal{C}$ and $\mathcal{D}$ satisfy that $\Gamma(\mathcal{C})\simeq \Gamma(\mathcal{D})$. More  precisely:

\begin{ques}
What conditions must satisfy two non permutation equivalent binary codes to guarantee that their Hasse diagrams are structurally identical?
\end{ques}

Now that we know that $\Gamma(\mathcal{C})$ is a connected bipartite graph, we can give a particular answer to the last question. In graph theory, for a positive integer $k$, the \emph{star $S_k$} is the complete bipartite graph $K_{1,k}$. Note that if $\Gamma(\mathcal{C})\simeq S_{2^{n-k}}$, this implies that all the cosets of $\mathcal{C}$ have weight at most 1; that is for all $\boldsymbol{u}\notin \mathcal{C}$ there is $i\in \{1,\ldots,n\}$ such that $\boldsymbol{u}\in \boldsymbol{e}_i+\mathcal{C}$, or equivalently  $\boldsymbol{u}=\boldsymbol{e}_i+\boldsymbol{c}$ for some $\boldsymbol{c}\in \mathcal{C}$, thus $d(\boldsymbol{u},\boldsymbol{c})=1$. It means that  for all $\boldsymbol{u}\in \mathbb{F}_2^n$,  there exists $\boldsymbol{c}\in \mathcal{C}$ such that  $d(\boldsymbol{u},\boldsymbol{c})\leq 1$. Therefore, $\mathbb{F}_2^n=\bigcup_{\boldsymbol{c}\in \mathcal{C}}S(\boldsymbol{c},1)$; namely $\rho(\mathcal{C})=1$. It is easy to see that if $\rho(\mathcal{C})=1$, then $\mathfrak{cl}(\mathcal{C})=\{\boldsymbol{e}_i+\mathcal{C}: \text{ for some } 1\leq i\leq n\}$. Consequently, the Hasse diagram  $\Gamma(\mathcal{C})$ is isomorphic to a graph as that one given in Figure \ref{fig:estrellainfinita}.

\begin{figure}[H]
		\centering
		\hspace*{2.5cm}\includegraphics[scale=0.4]{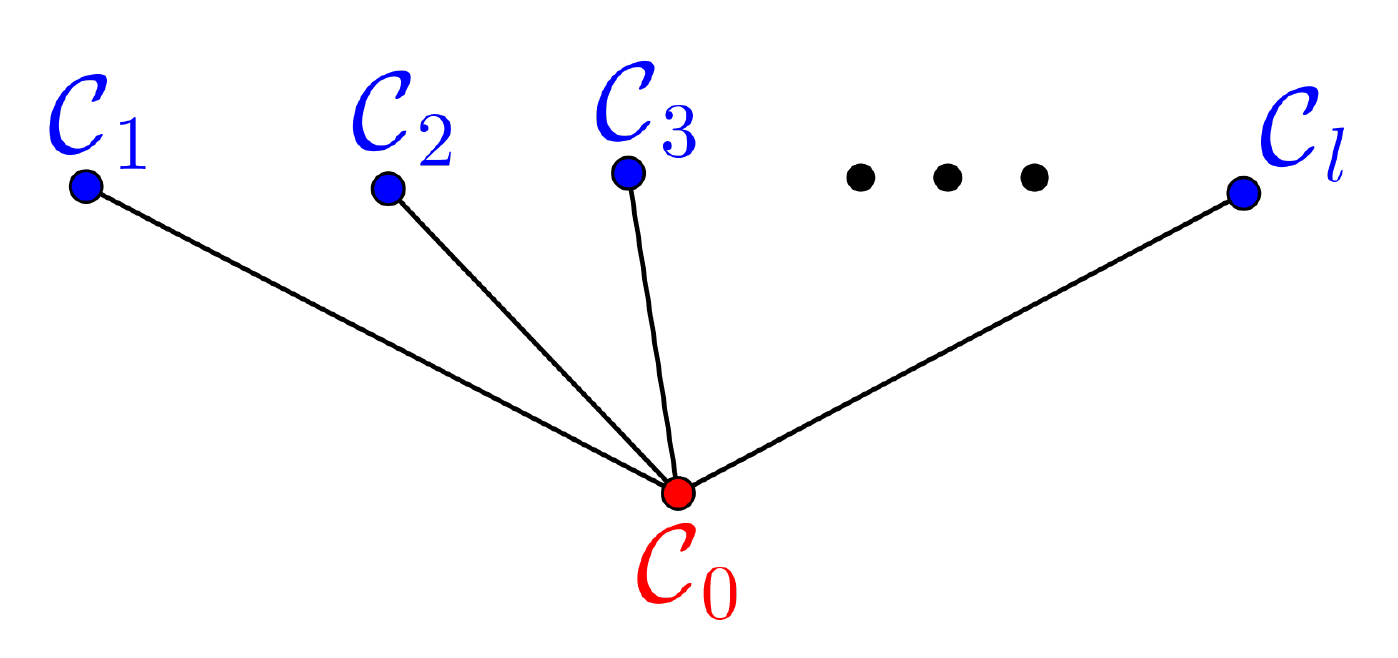}
		\caption{$\Gamma(\mathcal{C})$ with $\rho(\mathcal{C})=1$.}
		\label{fig:estrellainfinita}
	\end{figure}
	
Actually, in the last paragraph we have proven the following result.
\begin{teor} Let $\mathcal{C}$ a binary linear code. Then
$\Gamma(\mathcal{C})\simeq S_{2^{n-k}}$ the star graph if and only if $\rho(\mathcal{C})=1$.

\end{teor}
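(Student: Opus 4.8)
The plan is to prove the two implications separately, starting with the easy reverse direction. Assuming $\rho(\mathcal{C})=1$, the identity $\rho(\mathcal{C})=\max\{\wt(\boldsymbol{x}+\mathcal{C}):\boldsymbol{x}\in\mathbb{F}_2^n\}$ forces every coset to have weight $0$ or $1$; the only coset of weight $0$ is $\mathcal{C}_0=\mathcal{C}$, and the remaining $2^{n-k}-1$ cosets all have weight exactly $1$. Each weight-$1$ coset $\mathcal{C}_i$ satisfies $\mathcal{C}_0\prec\mathcal{C}_i$ with $\wt(\mathcal{C}_0)=0=\wt(\mathcal{C}_i)-1$, so $\mathcal{C}_0\mathcal{C}_i\in E_{\mathcal{C}}$, while two distinct cosets of the same weight $1$ are never adjacent. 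Hence $\mathcal{C}_0$ is joined to every other vertex and no other edges exist, which is exactly a star centred at $\mathcal{C}_0$.

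For the forward direction I would argue by contradiction: suppose $\Gamma(\mathcal{C})$ is a star but $\rho(\mathcal{C})\geq 2$. Descending from a coset of maximal weight $\rho(\mathcal{C})$ and repeatedly applying \cite[Corollary 11.7.7]{Huffman2003} --- which guarantees that every coset of weight $w\geq 1$ has a child of weight $w-1$ --- produces cosets of every intermediate weight, in particular a coset $\mathcal{C}_m$ of weight exactly $2$. Writing its leader as $\boldsymbol{e}_a+\boldsymbol{e}_b$ with $a\neq b$, the two vectors obtained by deleting one support coordinate, namely $\boldsymbol{e}_a$ and $\boldsymbol{e}_b$, are leaders of children of $\mathcal{C}_m$, again by \cite[Corollary 11.7.7]{Huffman2003}.

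The crux of the argument, and the step I expect to be the main obstacle, is to turn these children into a degree obstruction. The children $\boldsymbol{e}_a+\mathcal{C}$ and $\boldsymbol{e}_b+\mathcal{C}$ are \emph{distinct}, because their difference $\boldsymbol{e}_a+\boldsymbol{e}_b$ is the weight-$2$ leader of $\mathcal{C}_m$ and hence does not lie in $\mathcal{C}$; consequently $\mathcal{C}_m$ is adjacent to two different vertices and $\deg(\mathcal{C}_m)\geq 2$. At the same time the weight-$1$ coset $\boldsymbol{e}_a+\mathcal{C}$ is adjacent both to $\mathcal{C}_m$ (weights $1$ and $2$) and to $\mathcal{C}_0$ (weights $1$ and $0$), so it also has degree at least $2$. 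Since $\mathcal{C}_m$ and $\boldsymbol{e}_a+\mathcal{C}$ are distinct vertices of degree $\geq 2$, whereas a star $K_{1,m}$ has at most one vertex of degree exceeding $1$, we reach a contradiction. Therefore $\rho(\mathcal{C})\leq 1$; and since the star $\Gamma(\mathcal{C})$ has at least two vertices, the code is proper and $\rho(\mathcal{C})\geq 1$, giving $\rho(\mathcal{C})=1$ and completing the equivalence. (One could instead note $\rho(\mathcal{C})\leq\diam(\Gamma(\mathcal{C}))\leq 2$ via the diameter corollary, but the degree argument disposes of all cases $\rho(\mathcal{C})\geq 2$ at once.)
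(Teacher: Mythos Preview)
Your proof is correct. The reverse direction ($\rho(\mathcal{C})=1\Rightarrow$ star) matches the paper's argument exactly. For the forward direction the paper proceeds more directly: it simply asserts that if $\Gamma(\mathcal{C})$ is a star then every nontrivial coset has weight $1$, so every $\boldsymbol{u}\notin\mathcal{C}$ lies in some $\boldsymbol{e}_i+\mathcal{C}$, whence $\mathbb{F}_2^n=\bigcup_{\boldsymbol{c}\in\mathcal{C}}S(\boldsymbol{c},1)$ and $\rho(\mathcal{C})=1$. Your contradiction argument via degree counting---manufacturing a weight-$2$ coset from \cite[Corollary~11.7.7]{Huffman2003} and then exhibiting two distinct vertices of degree at least $2$---is longer but has the merit of making explicit \emph{why} no coset of weight $\geq 2$ can coexist with the star structure, a step the paper leaves unjustified. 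Both routes rest on the same structural fact (edges of $\Gamma(\mathcal{C})$ join consecutive weight levels), but yours packages it as a concrete degree obstruction rather than an implicit observation.
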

Binary codes with covering radius 1 has been studied in the literature, see \cite{habsieger1997} and the references quoted therein.

\end{document}